\newtheorem{Corollary}{Corollary}
\newtheorem{Proposition}{Proposition}
\newtheorem{Definition}{Definition}
\newtheorem{Hypothesis}{Hypothesis}
\title{On the M.Kac problem with augmented data}
\author{M.I.Belishev\thanks {St.Petersburg Department of Steklov Mathematical
        Institute, St.Petersburg, Russia, e-mail:
        belishev@pdmi.ras.ru},\,\,\,A.F.Vakulenko\thanks {St.Petersburg Department of Steklov Mathematical
        Institute, St.Petersburg, Russia, e-mail:
        vak@pdmi.ras.ru}}
\date{}
\begin{document}

\maketitle

\begin{abstract}
Let ${\Omega}$ be a bounded plane domain. As is known, the spectrum
$0<\lambda_1<\lambda_2\leqslant\dots$ of its Dirichlet Laplacian
$L=-\Delta{\upharpoonright}[H^2({\Omega})\cap H^1_0({\Omega})]$ does not determine
${\Omega}$ (up to isometry). By this, a reasonable version of the
M.Kac problem is to augment the spectrum with relevant data that
provide the determination.

To give the spectrum is to represent $L$ in the form $\tilde
L=\Phi L\Phi^*={\rm diag\,}\{\lambda_1,\lambda_2,\dots\}$ in the
space ${\bf l}_2$, where $\Phi:L_2({\Omega})\to{\bf l}_2$ is the
Fourier transform. Let ${\mathscr K}=\{h\in
L_2({\Omega})\,|\,\,\Delta h=0\,\,{\rm in}\,\,{\Omega}\}$ be the
harmonic function subspace, $\tilde{\mathscr K}=\Phi{\mathscr
K}\subset{\bf l}_2$. We show that, in a generic case, the pair
$\tilde L,\tilde {\mathscr K}$ determines ${\Omega}$ up to
isometry, what holds not only for the plain domains (drums) but
for the compact Riemannian manifolds of arbitrary dimension,
metric, and topology. Thus, the subspace $\tilde{\mathscr
K}\subset{\bf l}_2$ augments the spectrum, making the problem
uniquely solvable.
\end{abstract}

\rightline{\bf Dedicated to the jubilee of Nina Nikolaevna
Ural'tseva}

\section{Introduction}\label{Sec Introduction}

\subsubsection*{About the paper}
{\noindent$\bullet$\,\,\,} Let ${\Omega}$ be a bounded plane domain (drum). The M.Kac
problem is to determine the shape of ${\Omega}$ from the spectrum
$0<\lambda_1<\lambda_2\leqslant\dots$ of its Dirichlet Laplacian
$L=-\Delta{\upharpoonright}[H^2({\Omega})\cap H^1_0({\Omega})]$. The well-known
fact is that a unique (up to isometry) determination is
impossible: there exist isospectral but not isometric domains
\cite{Buser,Giraud}. By this, a reasonable version of the problem
is to augment the spectrum with relevant data that provide the
determination.

We provide such data. Moreover, in a generic case, the augmented
data enable to determine not only a plane domain but a Riemannian
manifold of arbitrary dimension, metric, and topology.
\smallskip

{\noindent$\bullet$\,\,\,} A construction that is used for the determination was
originated in \cite{B Kac_Prob}. In \cite{B JOT} it was recognized
as a {\it wave spectrum} of the symmetric semi-bounded operator.
The paper \cite{BSim_FAN} clarifies its metric and topological
background. In \cite{BSim_MS} the wave spectrum is used for
determination of the Riemannian manifold from its spectral data.
In the given paper, we adapt the approach \cite{BSim_MS} to deal
with M.Kac problem.

\subsubsection*{Content and results}
{\noindent$\bullet$\,\,\,} The wave spectrum is a set of atoms of a relevant lattice of
the subspaces in a Hilbert space. Its construction is rather
complicated technically. We provide a brief presentation of all
elements and steps of the construction, beginning with the basic
lattice theory notions \cite{Birhoff}. The main element is the
correspondence (isomorphism) between a metric lattice in ${\Omega}$
and a relevant Hilbert (sub)lattice in $L_2({\Omega})$.

The original data of the M.Kac problem is the spectrum $\sigma(L):
\,0<\lambda_1<\lambda_2\leqslant\dots$ of the Dirichlet Laplacian
$L=-\Delta{\upharpoonright}[H^2({\Omega})\cup H^1_0({\Omega})]$.
The Fourier transform $\Phi:L_2({\Omega})\to{\bf l}_2$ associated
with the eigen-basis of $L$, maps it to $\tilde L=\Phi
L\Phi^*={\rm diag\,}\{\lambda_1,\lambda_2,\dots\}$. Let ${\mathscr
K}=\{h\in L_2({\Omega})\,|\,\,\Delta h=0\,\,{\rm
into}\,\,{\Omega}\}$ be the harmonic function subspace, and let
$\tilde{\mathscr K}=\Phi{\mathscr K}\subset{\bf l}_2$ be its
spectral representation. The subspace $\tilde {\mathscr K}$ can be
given by a numerical array (an infinite matrix $\varkappa$) that
generalizes the relevant data in one-dimensional inverse problems
(see Comments at the end of the paper).

Our main result is that, in a generic case, the pair $\tilde
L,\tilde{\mathscr K}$ (equivalently, its numerical version
$\sigma(L),\varkappa$) determines ${\Omega}$ up to isometry, what
holds not only for the plain domains but for a generic class of
the compact Riemannian manifolds of arbitrary dimension, metric,
and topology. Thus, the subspace $\tilde{\mathscr K}\subset{\bf
l}_2$ augments the spectrum, gaining a uniqueness (up to isometry)
in the M.Kac problem.
\smallskip

{\noindent$\bullet$\,\,\,} At the end of the paper, it is hypothesized that for
determination of ${\Omega}$ it suffices to have two spectra:
$\sigma(L)$ and $\sigma(L_M)$, where $L_M$ is the (self-adjoint)
extension by M.G.Krein of the minimal Laplacian $L_0=-\Delta{\upharpoonright}
H^2_0({\Omega})$ (the so-called {\it soft extension}).
\smallskip

{\noindent$\bullet$\,\,\,} The authors are grateful to and S.A.Simonov for the useful
discussions on the subject of the paper.

\section{Geometry and Lattices}\label{Sec Lattices}

\subsubsection*{Manifold}

{\noindent$\bullet$\,\,\,} Let ${\Omega}$ be a smooth \footnote{everywhere {\it smooth}
means $C^\infty$-smooth} compact Riemannian manifold with a smooth
metric tensor $g$ and the smooth boundary
${\Gamma}:=\partial{\Omega}$, ${\rm dim\,}{\Omega}=n\geqslant 2$. Let
$\rm d$ be the distance in ${\Omega}$. The distance determines the
tensor, so that to give the metric space $({\Omega},\rm d)$ is to
give ${\Omega}$ as a Riemannian manifold.

For a set $A\subset{\Omega}$, let
$A^r:=\{x\in{\Omega}\,|\,\,x'\in{\Omega}\,|\,\,{\rm d\,}(x,x')<r\}$ be
its metric neighborhood of the radius $r>0$; so that $x^r$ is an
open ball of the radius $r$, centered at $x\in{\Omega}$. A boundary of
$A\subset{\Omega}$ is $\partial A:=\overline{A}\setminus{\rm
int\,}A$.

The sets
\begin{equation}\label{Eq Gamma^t}
{\Gamma}^t\,:=\,\{x\in{\Omega}\,|\,\,{\rm d\,}(x,{\Gamma})<t\},\quad t>0
\end{equation}
are the near-boundary subdomains of ${\Omega}$. By compactness of ${\Omega}$
we have $T_*:=\inf\,\{t>0\,|\,\,{\Gamma}^t={\Omega}\}<\infty$.
\smallskip

{\noindent$\bullet$\,\,\,} Let $\mu$ be the Riemannian measure (volume) on ${\Omega}$.
There are its well-known properties:

\noindent{\bf 1.}\,\,\,For any $A\subset{\Omega}$, the measure
$\mu(A^t)$ is a continuous function of $t>0$ satisfying
$\mu(\overline A)=\mu(A^{+0})$. As a consequence, for the metric
neighborhood boundary one has $\partial
A^t=\{x\in{\Omega}\,|\,\,{\rm d\,}(x,A)=t\}$ and $\mu(\partial
A^t)=0$,\,\,\,\,$t>0$.

\noindent{\bf 2.}\,\,\,For any ball, $\mu(x^r)>0$ and
$\mu(x^{+0})=0$ holds for all $x\in{\Omega}$.

These properties are relevant for the wave spectrum construction
in a more general case of a metric space with measure
\cite{BSim_MS}.

\subsubsection*{Lattices}

The minimal necessary information from lattice theory is given
here. For detail, see \cite{Birhoff}.
\smallskip

{\noindent$\bullet$\,\,\,} A lattice is a partially ordered (by an order $\leqslant$)
set ${\mathfrak L}$ endowed with two operations $p\wedge q:=\inf\{p,q\}$ and
$p\vee q:=\sup\{p,q\}$. We assume that ${\mathfrak L}$ contains the extreme
(lowest and greatest) elements $0$ and $1$.

There is a canonical ({\it order}) topology on ${\mathfrak L}$ determined by
the {\it order convergence}: for a net $p_\alpha$, the convergence
$p_\alpha\overset{o}\to p$ means that $p=
\underset{\alpha}\vee\underset{\beta>\alpha}\wedge
p_\beta=\underset{\alpha}\wedge\underset{\beta>\alpha}\vee
p_\beta$ holds.

We say ${\mathfrak L}$ to be a lattice with {\it complement} if for any
$p\in{\mathfrak L}$ there is a unique $p^\bot\in{\mathfrak L}$ such that $p\wedge
p^\bot=0$ and $p\vee p^\bot=1$ holds. In this case, the lattice
possesses the complement operation $\bot:p\mapsto p^\bot$.

A map $i:{\mathfrak L}\to{\mathfrak L}$ is {\it isotonic} if it prevents the order,
i.e., $p\leqslant q$ implies $i(p)\leqslant i(q)$,\,\,\,$p,q\in{\mathfrak L}$. An {\it
isotony} $I=\{i^t\}_{t\geqslant 0}$,\,\,\,$i^0:=\rm id$ is a monotone
family of isotonic maps, so that $p\leqslant q$ and $s<t$ imply
$i^s(p)\leqslant i^t(q)$.
\smallskip

Let ${\mathfrak L}$ be a lattice with a complement $\bot$ and isotony $I$.
Let $A_\alpha\subset{\mathfrak L}$ be a family of subspaces. By $\sqcup
A_\alpha\subset\mathfrak L$ we denote a (sub)lattice of $\mathfrak
L$ that

\noindent(a)\,\,\, contains $A_\alpha$,

\noindent(b)\,\,\,is closed w.r.t. the order topology:
$\overline{\sqcup A_\alpha}=\sqcup A_\alpha$, and invariant w.r.t.
complement operation: $\bot\,\sqcup A_\alpha=\sqcup A_\alpha$,

\noindent(c)\,\,\, is invariant w.r.t. the isotony: $I\sqcup\!
A_\alpha\subset\sqcup A_\alpha$,

\noindent(d)\,\,\,is the smallest of the lattices having the
properties (a), (b), (c).

\noindent We say that the family $A_\alpha$ generates the lattice
$\sqcup A_\alpha$.
\smallskip

{\noindent$\bullet$\,\,\,} The set $\mathfrak F_\mathfrak L$ of the ${\mathfrak L}$-valued
functions of the variable $t\geqslant 0$ constitutes a lattice with the
point-wise order and operations
$$
f\leqslant g \Leftrightarrow f(t)\leqslant g(t);\,\,\, (f\wedge
g)(t):=f(t)\wedge g(t);\,\,\, (f\vee g)(t):=f(t)\vee g(t),\quad
t\geqslant 0,
$$
the extreme elements $0(t)=0_{\mathfrak L}$,\, $1(t)=1_{\mathfrak L}$,\quad $t\geqslant 0$,
and the point-wise convergence $f\overset{o}\to g\Leftrightarrow
f(t)\overset{o}\to g(t),\quad t\geqslant 0$.

The lattice ${\mathfrak F}_{\mathfrak L}$ contains a set
$$
I{\mathfrak L}\,:=\,\{Ip\,|\,\,(Ip)(t)=i^t(p),\,\,p\in{\mathfrak L},\,\,\,t\geqslant 0\}
$$
and its topological closure $\overline{I{\mathfrak L}}$.
\smallskip

{\noindent$\bullet$\,\,\,} Let a partially ordered set $\mathscr P$ contain the lowest
element $0$. An element $a\in\mathscr P$ is called an {\it atom}
if $0\not=p\leqslant a$ implies $p=a$. By ${\rm At\,}\mathscr P$ we
denote the set of atoms.

The sets ${\rm At\,}\overline{I{\mathfrak L}}$ play the key role in our
further considerations.

\subsubsection*{Metric lattices}

From this point until section \ref{Sec Dynamics}, the narrative is
in the nature of a brief recounting of facts and results from
\cite{BSim_FAN} and \cite{BSim_MS}. As before, ${\Omega}$ is a
Riemannian manifold.
\smallskip

{\noindent$\bullet$\,\,\,} The open set family ${\mathfrak O}:=\{G\subset 2^{\Omega}\,|\,\,G={\rm
int\,}G\}$ is a lattice with the order $\subseteq$, extreme
elements $0=\emptyset$ and $1={\Omega}$ and the operations
$\wedge=\cap,\,\,\vee=\cup$. It is endowed with the {\it metric
isotony}
$$
I=\{i^t\}_{t\geqslant 0}\,,\qquad i^t(G):=G^t.
$$
Evidently, the lattice ${\mathfrak O}$ does not contain atoms. In the mean
time, the set $\overline{I{\mathfrak O}}\subset{\mathfrak F}_{\mathfrak O}$ of the ${\mathfrak O}$-valued
functions does possess the atoms, which are characterized as
follows:
\begin{equation*}\label{Eq geom atoms single}
{\rm At\,}\overline{I{\mathfrak O}}=\{a_x\,|\,\,x\in{\Omega}\},\qquad
a_x(t)\,:=\,(Ix)(t)=x^t,\quad t\geqslant 0,
\end{equation*}
so that the correspondence ${\Omega}\ni x\leftrightarrow a_x\in {\rm
At\,}\overline{I{\mathfrak O}}$ is a bijection. Moreover, the function ${\rm
d'}: {\rm At\,}\overline{I{\mathfrak O}}\times{\rm
At\,}\overline{I{\mathfrak O}}\to\overline{\mathbb R_+},$
$$
{\rm d'}(a,b)\,:=\,2\,\inf\,\{t> 0\,|\,\,a(t)\cap
b(t)\not=\emptyset\}
$$
is a metric, whereas the equality ${\rm d'}(a_x,a_y)={\rm
d\,}(x,y)$ holds. As a consequence, the map $\beta: x\mapsto a_x$
turns out to be an isometry from ${\Omega}$ onto ${\rm
At\,}\overline{I{\mathfrak O}}$.
\medskip

{\noindent$\bullet$\,\,\,} For our future goals, a `drawback' of the lattice ${\mathfrak O}$ is
that it contains the sets with $\mu(\partial G)\not=0$ and does
not admit the complement $\bot: G\mapsto G^\bot={\Omega}\setminus
G$. Removing the first one, we make use of a smaller lattice
$$
{\mathfrak O}':=\{G\in{\mathfrak O}\,|\,\,\mu(\partial G)=0\}\,.
$$
As is easy to check, the equality ${\rm At\,}\overline{I{\mathfrak O}'}={\rm
At\,}\overline{I{\mathfrak O}}$ holds. Hence, denoting ${\Omega}':={\rm
At\,}\overline{I{\mathfrak O}'}$, we conclude that the metric space
$({\Omega}',\rm d')$ is an isometric copy ({\it wave model} - see
\cite{BSim_FAN}) of the original $({\Omega},\rm d)$.
\medskip

{\noindent$\bullet$\,\,\,} To remove the second drawback needs more work of technical
character.

Recall that $A\Delta B:=[A\setminus B]\cup[B\setminus A]$. We say
$A$ and $B$ to be equivalent and write $A\sim B$ if $\mu(A\Delta
B)=0$ holds. By $\langle A\rangle$ we denote the equivalence class of $A$.

If the open sets $G$ and $G'$ are equivalent then
$\overline{G}=\overline{G'}$ holds and implies
$\overline{G}^{\,\,t}=\overline{G'}^{\,\,t},\,\,\,t>0$. Each
equivalence class $\langle G\rangle$ consists of not only open sets but
contains a canonical representative $\dot G={\rm
int\,}\overline{\dot G}\in\langle G\rangle$. Due to that, the metric
isotony is extended from sets to classes by $\langle G\rangle^t:=\langle
G^t\rangle=\langle {\dot G}^{\,\,t}\rangle$, so that the map $\langle
G\rangle\mapsto\langle G\rangle^t$ is well defined (does not depend on the
representative $G\in\langle G\rangle$) and isotonic.

Eventually, we arrive at the lattice of the classes
$$
{\mathfrak L}^{\rm m}_{\Omega}\,:=\,\{\langle G\rangle\,|\,\,G\in{\mathfrak O}'\}
$$
with the order $\langle G\rangle\leqslant\langle G'\rangle \Leftrightarrow
\mu(G'\setminus G)\geqslant 0$, the extreme elements $0=\langle
\emptyset\rangle$ and $1=\langle {\Omega}\rangle$, the operations
$$
\langle G\rangle\cap\langle G'\rangle:=\langle G\cap G'\rangle;\quad \langle G\rangle\cup\langle
G'\rangle:=\langle G\cup G'\rangle;\quad \langle G\rangle^\bot:= \langle G^\bot\rangle;
$$
and the isotony
$$
I_\star=\{i_\star^t\}_{t\geqslant 0}\,,\quad i^t_\star:\langle
G\rangle\,\mapsto\,\langle G\rangle^t .:=\langle
$$
The index $\rm m$ in the notation  ${\mathfrak L}^{\rm m}_{\Omega}$ indicates
the metric nature and connection with the metric isotony.
\smallskip

{\noindent$\bullet$\,\,\,} The lattice ${\mathfrak F}_{{\mathfrak L}_{\Omega}^{\rm m}}$ of the
${\mathfrak L}_{\Omega}^{\rm m}$-valued functions contains the closed set
$\overline{I_\star{\mathfrak L}_{\Omega}^{\rm m}}$, for which one has
\begin{equation*}\label{Eq geom atoms}
{\rm At\,}\overline{I_\star{\mathfrak L}_{\Omega}^{\rm
m}}=\{a^\star_x\,|\,\,x\in{\Omega}\},\qquad a^\star_x(t)\,:=\,\langle
a_x(t)\rangle=\langle x^t\rangle,\quad t\geqslant 0,
\end{equation*}
so that there is a bijection ${\Omega}\ni x\leftrightarrow
a^\star_x\in {\rm At\,}\overline{I{\mathfrak L}_{\Omega}^{\rm m}}$. Moreover,
the function
$$
{\rm d}_\star(a^\star,b^\star)\,:=\,2\,\inf\,\{t>
0\,|\,\,a^\star(t)\cap b^\star(t)\not=\emptyset\}, \qquad
a^\star,b^\star\,\in\,{\rm At\,}\overline{I{\mathfrak L}_{\Omega}^{\rm m}}
$$
is a metric, whereas the equality ${\rm
d}_\star(a^\star_x,a^\star_y)={\rm d\,}(x,y)$ holds and the map
$\beta_\star: x\mapsto a^\star_x$ turns out to be an isometry from
${\Omega}$ onto ${\rm At\,}\overline{I{\mathfrak L}_{\Omega}^{\rm m}}$.
\smallskip

As the result, denoting ${\Omega}_\star:={\rm
At\,}\overline{I{\mathfrak L}_{\Omega}^{\rm m}}$, 
we have an isometry $\beta_\star:({\Omega}, \rm
d)\to({\Omega}_\star,{\rm d}_\star)$. The distance ${\rm d}_\star$
determines the whole Riemannian structure on ${\Omega}_\star$ that
turns the latter into an isometric copy (model) of the original
Riemannian manifold ${\Omega}$.

In what follows we refer on $({\Omega}_\star,{\rm d}_\star)$ as a
{\it metric wave model} of the manifold ${\Omega}$\,
\cite{BSim_FAN,BSim_MS}. The relevancy of such a name will become
clear later.


\subsubsection*{Hilbert lattices}

{\noindent$\bullet$\,\,\,} Let ${\mathfrak L}$ be the lattice of the (closed) subspaces of the
Hilbert space $\mathscr H=L_{2,\,\mu}({\Omega})$ endowed with the
operations
$$
\mathscr A\wedge\mathscr B=\mathscr A\cap\mathscr B,\,\,\,
\mathscr A\vee\mathscr B=\overline{\{a+b\,|\,\,a\in\mathscr
A,\,\,b\in\mathscr B\}},\,\,\,\mathscr A\mapsto\mathscr
A^\bot=\mathscr H\ominus\mathscr A,
$$
the extreme elements $0=\{0\}$ and $1=\mathscr H$. As can be shown, the
order topology coincides with the one determined by the strong
operator convergence of the corresponding (orthogonal)
projections: $\mathscr A_\alpha\overset{o}\to\mathscr A\,
\Leftrightarrow\,P_{\mathscr A_\alpha}\to P_\mathscr A$.

The subspaces of the form $\mathscr
H\langle{\omega}\rangle:=\{\chi_{\omega} y\,|\,\,y\in\mathscr
H\}$, where ${\omega}\subset{\Omega}$ is a measurable set and
$\chi_{\omega}$ is its indicator, are called {\it local}. For
${\omega}\sim{\omega}'$ one has $\mathscr
H\langle{\omega}\rangle=\mathscr H\langle{\omega}'\rangle$, so
that a local subspace  is determined by the equivalence class
$\langle{\omega}\rangle$. Conversely, $\mathscr
H\langle{\omega}\rangle=\mathscr H\langle{\omega}'\rangle$ leads
to ${\omega}\sim{\omega}'$.

The set
$$
{\mathfrak L}_{\Omega}^{\rm h}\,:=\,\{\mathscr H\langle G\rangle\,|\,\,\langle G\rangle \in
{\mathfrak L}_{\Omega}^{\rm m} \}
$$
is a topologically closed sublattice of ${\mathfrak L}$, which is invariant
w.r.t. the isotony
\begin{equation}\label{Eq Hilbert isotony}
I_*=\{i_*^t\}_{t\geqslant 0}\,,\qquad i_*^t:\mathscr H\langle G\rangle\mapsto\mathscr H\langle
G^t\rangle.
\end{equation}
The index $\rm h$ indicates a Hilbert space background of this
construction.
\begin{Proposition}\label{Prop 1}
\cite{BSim_MS}\,\,\,The map $\theta:\langle G\rangle\mapsto\mathscr H\langle G\rangle$ is
a lattice isomorphism from ${\mathfrak L}_\Omega^{\rm  m}$ onto
${\mathfrak L}_\Omega^{\rm h}$: it preserves all lattice operations and
intertwines the isotonies: $\theta I_\star=I_*\theta$.
\end{Proposition}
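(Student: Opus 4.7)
The plan is to verify in order: that $\theta$ is a well-defined bijection; that it preserves each of the three lattice operations $\wedge,\vee,\bot$; and finally that it intertwines the two isotonies. The workhorse throughout will be the elementary characterization $y\in\mathscr{H}\langle G\rangle\iff y=0$ $\mu$-a.e.\ on $\Omega\setminus G$, combined with the defining property $\mu(\partial G)=0$ of members of $\mathfrak{O}'$. Well-definedness and bijectivity are essentially recorded already before the statement: $\mathscr{H}\langle\omega\rangle$ depends only on the class $\langle\omega\rangle$, while $\mathscr{H}\langle\omega\rangle=\mathscr{H}\langle\omega'\rangle$ forces $\omega\sim\omega'$, giving injectivity; surjectivity onto $\mathfrak{L}^{\rm h}_\Omega$ is by construction of the target lattice.

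For the operations, I would first handle the meet: a function lies in $\mathscr{H}\langle G\rangle\cap\mathscr{H}\langle G'\rangle$ iff it vanishes a.e.\ off $G\cap G'$, which reads $\theta(\langle G\cap G'\rangle)=\theta(\langle G\rangle)\wedge\theta(\langle G'\rangle)$. For the join, any $y\in\mathscr{H}\langle G\cup G'\rangle$ splits as $\chi_G\,y+\chi_{G'\setminus G}\,y$, so the algebraic sum $\mathscr{H}\langle G\rangle+\mathscr{H}\langle G'\rangle$ already exhausts $\mathscr{H}\langle G\cup G'\rangle$; no closure is needed, and the reverse inclusion is obvious. For the complement I use the measure-zero boundary condition: the identity $\chi_G+\chi_{\Omega\setminus\overline{G}}=1$ holds $\mu$-a.e., producing the orthogonal decomposition $\mathscr{H}=\mathscr{H}\langle G\rangle\oplus\mathscr{H}\langle G^\bot\rangle$, which is exactly $\theta(\langle G\rangle^\bot)=\theta(\langle G\rangle)^\bot$. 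Order preservation is then free, since $p\leqslant q\iff p\wedge q=p$ in any lattice.

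The intertwining relation $\theta I_\star=I_*\theta$ is a direct comparison: $\theta\circ i^t_\star(\langle G\rangle)=\mathscr{H}\langle G^t\rangle=i^t_*\circ\theta(\langle G\rangle)$ by the definitions of $i^t_\star$ and \eqref{Eq Hilbert isotony}. Continuity for the order topologies is automatic, because order convergence $p_\alpha\overset{o}\to p$ is phrased purely with $\vee$ and $\wedge$ and is therefore preserved by any lattice isomorphism; this in particular confirms that $\mathfrak{L}^{\rm h}_\Omega$ inherits topological closedness from $\mathfrak{L}^{\rm m}_\Omega$. The only genuinely delicate point is the complement step: without the passage from $\mathfrak{O}$ to $\mathfrak{O}'$, the equality $\chi_G+\chi_{\Omega\setminus\overline{G}}=1$ would fail on a set of positive measure and $\mathscr{H}\ominus\mathscr{H}\langle G\rangle$ would no longer be local; so the restriction to boundary-null open sets imposed earlier is precisely what makes $\bot$ match on the two sides. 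Everything else is a routine unpacking of definitions.
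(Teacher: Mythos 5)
Your argument is correct; note, however, that the paper does not prove Proposition \ref{Prop 1} at all — it is quoted from \cite{BSim_MS} — so there is no internal proof to compare against, and a self-contained verification like yours is exactly what a reader would need. The steps check out: injectivity and well-definedness follow from the stated equivalence $\mathscr H\langle\omega\rangle=\mathscr H\langle\omega'\rangle\Leftrightarrow\omega\sim\omega'$, surjectivity is by the definition of ${\mathfrak L}^{\rm h}_{\Omega}$; the indicator calculus gives $\mathscr H\langle G\rangle\cap\mathscr H\langle G'\rangle=\mathscr H\langle G\cap G'\rangle$, the splitting $y=\chi_G y+\chi_{G'\setminus G}y$ shows the algebraic sum is already $\mathscr H\langle G\cup G'\rangle$ (so the closure in the definition of $\vee$ is harmless), and your observation that $\mu(\partial G)=0$ is precisely what makes $\chi_G+\chi_{\Omega\setminus\overline G}=1$ hold a.e., hence $\theta(\langle G\rangle^\bot)=\mathscr H\ominus\mathscr H\langle G\rangle$, correctly isolates the role of passing from ${\mathfrak O}$ to ${\mathfrak O}'$; the intertwining $\theta\, i^t_\star=i^t_*\,\theta$ is indeed immediate from (\ref{Eq Hilbert isotony}) once one recalls that $\langle G\rangle^t=\langle G^t\rangle$ is representative-independent and $\mu(\partial G^t)=0$ keeps $G^t$ inside ${\mathfrak O}'$ (both recorded earlier in the paper). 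One small caution: your parenthetical claim that the isomorphism "confirms that ${\mathfrak L}^{\rm h}_{\Omega}$ inherits topological closedness" overreaches — a lattice isomorphism preserves the intrinsic order convergence (being an order bijection it preserves all existing suprema and infima), but closedness of ${\mathfrak L}^{\rm h}_{\Omega}$ as a subset of the full subspace lattice ${\mathfrak L}$ refers to limits taken in ${\mathfrak L}$ (strong limits of the multiplication projections) and needs a separate argument; since that assertion belongs to the paragraph preceding the Proposition and not to the Proposition itself, it does not affect the validity of your proof.
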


Denote ${\Omega}_*:={\rm At\,}\overline{I_*{\mathfrak L}_{\Omega}^{\rm h}}$. The
isomorphism of the lattices induces a canonical correspondence
between atoms
$$
{\Omega}_\star\,\ni\,a^\star_x\leftrightarrow
a^*_x\in{\Omega}_*:\qquad a^\star_x(t)=\langle x^t\rangle,\,\,\,
a^*_x(t)=\mathscr H\langle x^t\rangle,\quad t\geqslant 0,\,\,x\in{\Omega}.
$$
Introducing ${\rm d}_*(a^*,b^*):=2\inf\{t>0\,|\,\,a^*_x(t)\cap
b^*_x(t)\not=0_\mathscr H\}$, we get the isometric manifolds
\begin{equation}\label{Eq Copies}
({\Omega}_*,{\rm d}_*)\leftrightarrow({\Omega}_\star,{\rm
d}_\star)\leftrightarrow({\Omega}, \rm d)
\end{equation}
or, to say better, two isometric copies  $({\Omega}_*,{\rm d}_*)$
and $({\Omega}_\star,{\rm d}_\star)$ of the original $({\Omega}, \rm
d)$. The map $\beta_*:x\mapsto a^*_x$ is an isometry from ${\Omega}$
onto ${\Omega}_*$. In what follows we name $({\Omega}_*,{\rm d}_*)$ by
a {\it Hilbert wave model} of the manifold ${\Omega}$.

\subsubsection*{Simple manifolds}

{\noindent$\bullet$\,\,\,} Let us return to the metric lattices in
${\Omega}$. We say the family of the extending neighborhoods
${\gamma}:=\{{\Gamma}^t\}_{t>0}$ (see (\ref{Eq Gamma^t})) to be a
{\it boun\-dary nest}. By the properties of the Riemannian volume,
the equality $\mu(\partial {\Gamma}^t)=0$ holds, so that the
embedding ${\gamma}^{\rm
m}:=\{\langle{\Gamma}^t\rangle\}_{t>0}\subset\mathfrak L^{\rm
m}_{\Omega}$ occurs.

By $\mathfrak L^{\rm m}_{\Gamma}:=\sqcup{\gamma}^{\rm m}$ we
denote the (sub)lattice in $\mathfrak L^{\rm m}_{\Omega}$
generated by the boundary nest.
\begin{Definition}\label{Def 1}
The manifold ${\Omega}$ is called {\it simple} if the set
$\{G\,|\,\,\langle G \rangle\subset \mathfrak L^{\rm m}_{\Gamma}\}$ is a
base of the (metric) topology in ${\Omega}$.
\end{Definition}
By this definition, for any $x\in{\Omega}$ and (arbitrarily small)
$\varepsilon>0$ there is an open set $G\in\langle G\rangle\in \mathfrak
L^{\rm m}_{\Gamma}$ such that $x\in G$ and ${\rm
diam\,}G<\varepsilon$ holds. This easily follows to $ {\rm
At\,}\overline{I_\star\mathfrak L^{\rm m}_{\Gamma}}={\rm
At\,}\overline{I_\star\mathfrak L^{\rm m}_{\Omega}}={\Omega}_\star$.
\smallskip

An evident obstacle for ${\Omega}$ not to be simple is its
symmetries. For instance, if ${\Omega}=\{x\in\mathbb
R^n\,|\,\,|x|=R\}$ is a ball then, as is easy to verify, its
metric copy $({\Omega}_\star,\rm d_\star)$ is isometric to the
segment $[0,R]\subset \mathbb R$. In the mean time, on a
Riemannian manifold, a ball $x^r\subset {\Omega}$ of the radius less
than the injectivity radius at $x$ may have the trivial symmetry
group but its metric copy is also isometric to the segment
$[0,r]$. A triangle $T\subset\mathbb R^2$ with the pair-wise not
equal length of the sides is simple: its metric copy is isometric
to $T$ itself. The simplicity of ${\Omega}$ is a generic property:
it can be gained by the arbitrarily small smooth variations of the
boundary $\partial{\Omega}$.

\smallskip

{\noindent$\bullet$\,\,\,} Let us outline the further use of the copies (\ref{Eq
Copies}) in the determination of ${\Omega}$. The manifold ${\Omega}$
is assumed {\it simple}.
\smallskip

Let us be given the nest ${\gamma}^{\rm m}$ and the isotony
$I_\star$. Then we can determine the metric copy of ${\Omega}$ by
the scheme
\begin{equation*}
{\gamma}^{\rm
m},\,I_\star\,\,\overset{\sqcup}{\,\,\Rightarrow}\,\,\mathfrak
L^{\rm m}_{\Gamma}\,\,\Rightarrow\,\,{\rm
At\,}\overline{I_\star\mathfrak L^{\rm
m}_{\Gamma}}\,\,\Rightarrow\,\,({\Omega}_\star,\rm d_\star).
\end{equation*}

Denote $\gamma^{\rm h}:=
\theta \gamma^{\rm m}
=
\{  {\mathscr H}\langle
{\Gamma}^t\rangle\}_
{t\geqslant 0}\subset {\mathfrak L}^{\rm
h}_{\Omega}=\theta\mathfrak L^{\rm m}_{\Omega}$. The Hilbert copy can
be constructed just by reproducing the previous scheme:
\begin{equation*}\label{Eq scheme Hilbert}
{\gamma}^{\rm
h},\,I_*\,\,\overset{\sqcup}{\,\,\Rightarrow}\,\,\mathfrak L^{\rm
h}_{\Gamma}\,\,\Rightarrow\,\,{\rm At\,}\overline{I_*\mathfrak
L^{\rm h}_{\Gamma}}\,\,\Rightarrow\,\,({\Omega}_*,\rm d_*).
\end{equation*}

The Hilbert copy is of invariant nature in the following sense.
Let $U:{\mathscr H} \to \tilde {\mathscr H}$ be a unitary
operator. There are the objects in $\tilde{\mathscr H}$ dual to
ones in $\mathscr H$: $\tilde{\gamma}^{\rm h}=U{\gamma}^{\rm
h}=\{U\mathscr H\langle{\gamma}^t\rangle\}_{t\geqslant 0}$,
$\tilde I_*=UI_*U^*$, $\tilde{\mathfrak L}^{\rm h}_{\Gamma}=\sqcup
{\tilde{\gamma}}^{\rm h}$, and so on. By the use of them, one
constructs an isometric copy of the original manifold ${\Omega}$
as follows:
\begin{equation}\label{Eq scheme tilde}
\tilde{\gamma}^{\rm h},\,\tilde
I_*\,\,\overset{\sqcup}{\,\,\Rightarrow}\,\,\tilde{\mathfrak
L}^{\rm h}_{\Gamma}\,\,\Rightarrow\,\,{\rm At\,}\overline{\tilde
I_*\tilde{\mathfrak L}^{\rm
h}_{\Gamma}}\,\,\Rightarrow\,\,(\tilde{\Omega}_*,\tilde{\rm
d}_*)\overset{\rm isom}=({\Omega},\rm d).
\end{equation}
It is the scheme that is relevant for determination of
$({\Omega},\rm d)$: we'll be given a concrete space $\tilde
{\mathscr H}$ together with the nest ${\tilde{\gamma}}^{\rm h}$
and isotony $\tilde I_*$ in it that provides the staring point for
the procedure (\ref{Eq scheme tilde}).
\smallskip

We refer to $({\tilde{\Omega}}_*,\tilde{\rm d}_*)$ as a Hilbert wave
model of ${\Omega}$ in the $U$-representation.

\section{Dynamics}\label{Sec Dynamics}

\subsubsection*{Operators}
Recall that $\mathscr H=L_{2,\,\mu}({\Omega})$. Let $L_0:=-\Delta{\upharpoonright}
H^2_0({\Omega})=\overline{-\Delta{\upharpoonright} C^\infty_0({\Omega})}$ be the
{\it minimal} Beltrami-Laplace operator in $\mathscr H$. It is a
symmetric completely non-selfadjoint positive definite operator,
so that $(L_0y,y)\geqslant\nu\|y\|^2$ holds with a $\nu>0$ for
$y\in{\rm Dom\,} L_0$. Its defect indexes are $n^\pm=\infty$. Its
Friedrichs extension is $L:=-\Delta{\upharpoonright}[H^2({\Omega})\cap
H^1_0({\Omega})]$ satisfies $L=L^*$ and $(Ly,y)\geqslant\nu\|y\|^2$.
Its adjoint is the {\it maximal} Laplacian
$L_0^*=-\Delta{\upharpoonright}[{\rm Dom\,}L\dotplus {\mathscr K}]$, where
$$
{\mathscr K}:={\rm Ker\,}L_0^*=\{y\in\mathscr H\,|\,\,\Delta y=0\,\,{\rm
in}\,\,{\Omega}\setminus{\Gamma}\}
$$
is the harmonic function subspace, ${\rm dim\,}{\mathscr K}=n^\pm=\infty$.
So, we have an operator triple
\begin{equation}\label{Eq oper triple}
L_0\,\subset\,L\,\subset\,L_0^*.
\end{equation}

\subsubsection*{System $\alpha$}
{\noindent$\bullet$\,\,\,} With the operators one associates a dynamical system $\alpha$
of the form
\begin{align}
\label{Eq 1}& u_{tt}-\Delta u = 0  && {\rm in}\,\,\,({\Omega}\setminus{\Gamma})\times\mathbb R_+;\\
\label{Eq 2}& u|_{t=0}=u_t|_{t=0}=0 && {\rm in}\,\,{\Omega};\\
\label{Eq 3}& u = f && {\rm on}\,\,\,{\Gamma}\times \overline{\mathbb R_+};
\end{align}
where $t$ is a time; $\Delta$ is the Beltrami-Laplace operator in
${\Omega}$,  $f$ is a {\it boundary control}. By $u=u^f(x,t)$ we
denote the solution ({\it wave}). For smooth controls 
$f\in C^\infty({\Gamma}\times\overline{\mathbb R_+})$ vanishing near
$t=0$, the solution $u^f$ is unique and classical.

As is known, the cor\-res\-pondence $f\mapsto u^f(\cdot,t)$ is
continuous from $L_{2,\,{\rm d}{\Gamma}}({\Gamma}\times[0,t])$ to
$\mathscr H$ for all $t>0$, where ${\rm d}{\Gamma}$ is the surface element
induced by the volume $\mu$. Note that the class of the smooth
controls is dense in $L_{2,\,{\rm d}{\Gamma}}({\Gamma}\times[0,t])$
for any $t>0$.
\smallskip

{\noindent$\bullet$\,\,\,} The sets ${\mathscr U}^t:=\{u^f(\cdot,t)\,|\,\,f\in L_{2,\,{\rm
d}{\Gamma}}({\Gamma}\times[0,t])\}\,\,\,(t> 0)$ are called {\it
reachable}. A fact that plays the key role in our approach to
inverse problems, is as follows.
\begin{Proposition}\label{Prop Holmgren}
The equality
\begin{equation}\label{Eq Holmgren}
\overline{{\mathscr U}^t}\,=\,\mathscr H\langle {\Gamma}^t\rangle,\qquad t>0
\end{equation}
holds
\end{Proposition}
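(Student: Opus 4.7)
The plan is to prove the two inclusions of (\ref{Eq Holmgren}) by distinct mechanisms: a finite-speed-of-propagation argument in one direction and a duality/unique-continuation argument in the other. For the easy inclusion $\overline{{\mathscr U}^t}\subseteq{\mathscr H}\langle\Gamma^t\rangle$, I would note that for smooth $f$ the classical wave $u^f(\cdot,s)$ is supported in $\overline{\Gamma^s}$: this is the standard Riemannian finite-speed-of-propagation bound for (\ref{Eq 1})--(\ref{Eq 3}), proved by an energy estimate in the backward cone from any point outside $\overline{\Gamma^s}$. Density of smooth controls in $L_{2,\,{\rm d}\Gamma}(\Gamma\times[0,t])$ together with continuity of $f\mapsto u^f(\cdot,t)$ then extends this inclusion to the closure $\overline{{\mathscr U}^t}$.

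For the reverse inclusion ${\mathscr H}\langle\Gamma^t\rangle\subseteq\overline{{\mathscr U}^t}$, I would proceed by duality. Fix $y\in{\mathscr H}\langle\Gamma^t\rangle$ with $(y,u^f(\cdot,t))_{\mathscr H}=0$ for every smooth $f$ and introduce the dual wave $w=w^y$ solving the same wave equation on $\Omega\times(0,t)$ with Dirichlet lateral condition $w|_\Gamma=0$ and terminal data $w(\cdot,t)=0,\;w_s(\cdot,t)=-y$. A Green identity in space combined with integration by parts in time---using the vanishing Cauchy data of $u^f$ at $s=0$ and of $w$ at $s=t$---yields
\begin{equation*}
(u^f(\cdot,t),y)_{\mathscr H}\;=\;-\int_0^t\!\!\int_\Gamma f\,\partial_\nu w\,{\rm d}\Gamma\,{\rm d}s.
\end{equation*}
The annihilation hypothesis and the density of smooth $f$ therefore force $\partial_\nu w=0$ on $\Gamma\times[0,t]$, so that $w$ has vanishing Cauchy data on the entire lateral strip.

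The main obstacle---and the only genuinely analytic ingredient---is the passage from zero lateral Cauchy data to interior vanishing. I would invoke Tataru's sharp unique continuation theorem (in its Riemannian, non-analytic-coefficient form, as refined by H\"ormander, Robbiano and Zuily). The technical device is to extend $w$ across $s=t$ by odd reflection, $\tilde w(x,s):=-w(x,2t-s)$ for $s\in[t,2t]$; one checks that $\tilde w$ solves the wave equation on $\Omega\times(0,2t)$ and inherits vanishing Dirichlet \emph{and} Neumann traces on $\Gamma\times(0,2t)$. Tataru's theorem then yields $\tilde w\equiv 0$ on the double cone $\{(x,s):{\rm d}(x,\Gamma)<\min(s,2t-s)\}$; at $s=t$ this region contains a full open spacetime neighborhood of $\{t\}\times\Gamma^t$, so differentiation in $s$ gives $-y(x)=\tilde w_s(x,t)=0$ for every $x\in\Gamma^t$. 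Combined with the support condition $y\in{\mathscr H}\langle\Gamma^t\rangle$, this forces $y\equiv 0$ and completes the proof.
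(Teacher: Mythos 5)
Your argument is correct and is essentially the proof the paper relies on: the paper does not prove Proposition \ref{Prop Holmgren} itself but refers to \cite{B Obzor IP 97}, noting that the result rests on the Holmgren--John--Tataru uniqueness theorem, which is exactly the engine of your duality step (finite speed of propagation for the easy inclusion, transposition to the dual Dirichlet problem, vanishing Neumann trace, odd reflection about $s=t$, and Tataru's theorem on the double cone $\{{\rm d}(x,\Gamma)<\min(s,2t-s)\}$). The only cosmetic caveats are that the double cone contains a neighborhood of each point of $\{t\}\times\Gamma^t$ rather than of the whole slice (which is all you need to get $y=0$ a.e.\ on $\Gamma^t$), and that the identification of support in $\overline{\Gamma^t}$ with membership in $\mathscr H\langle\Gamma^t\rangle$ uses $\mu(\partial\Gamma^t)=0$.
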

\noindent(see \cite{B Obzor IP 97}). It is based upon the
fundamental Holmgren-John-Tataru Theorem on uniqueness of
continuation of the hyperbolic equation solutions across a
noncharacteristic surface. This equality is interpreted as a {\it
local approximate boundary controllability of the system $\alpha$}
or as a completeness of waves in the domains that they fill up.
Note that, by the latter, due to the compactness of ${\Omega}$  we
have $\overline{{\mathscr U}^t}=\mathscr H$ for $t\geqslant T_*$.

\subsubsection*{Wave nest}

The extending family of the subspaces ${\gamma}^{\rm
w}:=\{\overline{{\mathscr U}^t}\}_{t\geqslant 0}\subset\mathscr H$
is called a {\it wave nest}. As a consequence of the equality
(\ref{Eq Holmgren}), we have the coincidence of the nests:
\begin{equation}\label{Eq gamma^w=gamma^h}
{\gamma}^{\rm w}={\gamma}^{\rm h}\,.
\end{equation}
Here we prepare a representation of the wave nest relevant for the
future use.

In the operator form, the wave equation (\ref{Eq 1}) is
$u_{tt}+L_0^*u=0$\,\,in \,$\mathscr H$. The control $f$ from (\ref{Eq 3})
determines the harmonic (i.e., ${\mathscr K}$-valued) function $\check f$
that depends on time as a parameter and satisfies
$$
\Delta\check f(\cdot,t)=0\quad{\rm
in}\,\,{\Omega}\setminus{\Gamma};\qquad \check
f(\cdot,t)|_{\Gamma}=f(\cdot,t),\,\,t\geqslant 0.
$$
Representing $u=\check f+w$, we have $w_{tt}+Lw=-\check
f_{tt}$\,\,\, (recall (\ref{Eq oper triple}))\,\, and, integrating
the latter equation, get
$$
u^f(\cdot,t)=\check
f(\cdot,t)-L^{-\frac{1}{2}}\int_0^t\sin\,[(t-s)L^{\frac{1}{2}}]\,\check
f_{tt}(\cdot,s)\,ds\quad {\rm in}\,\,\,\mathscr H\,,\quad \check
f(\cdot,t)\in{\mathscr K} .
$$
As the result, we arrive at a representation in $\mathscr H$;
\begin{align}
\notag &{\gamma}^{\rm w}=\{\overline{{\mathscr U}^t}\}_{t\geqslant
0}\,;\quad {\mathscr U}^t=\{u^h(\cdot,t)\,|\,\,h\in
C^\infty([0,t];{\mathscr K}),\,\,{\rm
supp\,}h\subset(0,t]\},\\
& \label{Eq repres gamma^w}
u^h(\cdot,t)=h(\cdot,t)-L^{-\frac{1}{2}}\int_0^t\sin\,[(t-s)L^{\frac{1}{2}}]\,
h_{tt}(\cdot,s)\,ds,\quad t\geqslant 0
\end{align}
in terms of the $\mathscr H$-  and ${\mathscr K}$-valued
functions. By the latter, to describe the wave nest image
${\tilde{\gamma}}^{\rm w}=U{\gamma}^{\rm w}$ in a space ${\tilde
{\mathscr H}}=U\mathscr H$ it suffices just to provide the proper
terms of (\ref{Eq repres gamma^w}) with tildes:
\begin{align}
\notag &{\tilde{\gamma}}^{\rm w}=\{\overline{{\tilde{\mathscr
U}}^t}\}_{t\geqslant 0}\,;\quad \tilde{\mathscr U}^t=\{\tilde
u^h(t)\,|\,\,h\in C^\infty([0,t];\tilde{\mathscr K}),\,\,{\rm
supp\,}h\subset(0,t]\},\\
& \label{Eq repres tilde gamma^w} {\tilde u}^h(t)=h(t)-\tilde
L^{-\frac{1}{2}}\int_0^t\sin\,[(t-s)\tilde L^{\frac{1}{2}}]\,
h_{tt}(s)\,ds\quad {\rm in}\,\,\tilde{\mathscr H},\qquad
t\geqslant 0.
\end{align}

\subsubsection*{Wave isotony}
{\noindent$\bullet$\,\,\,} Let us consider one more dynamical system
\begin{align}
\label{Eq 11}& v_{tt}-\Delta v = \psi  && {\rm in}\,\,\,({\Omega}\setminus{\Gamma})\times{\mathbb R_+};\\
\label{Eq 21}& v|_{t=0}=v_t|_{t=0}=0 && {\rm in}\,\,{\Omega};\\
\label{Eq 31}& v = 0 && {\rm on}\,\,\,{\Gamma}\times \overline{\mathbb R_+};
\end{align}
with a {\it domain control} $\psi=\psi(x,t)$; by $v=v^\psi(x,t)$
we denote a solution. For a smooth $\psi$ vanishing near $t=0$ the
solution $v^\psi$ is unique and classical. In the equivalent form,
regarding $\psi$ and $v^\psi$ as the $\mathscr H$-valued functions of
time, we have the system
\begin{align}
\label{Eq 1+}& v_{tt}+L v = \psi  && {\rm in}\,\,\mathscr H,\,\,\,t>0;\\
\label{Eq 2+}& v|_{t=0}=v_t|_{t=0}=0 && {\rm in}\,\,\mathscr H;
\end{align}
and represent its solution in the well-known form
\begin{equation}\label{Eq v^psi}
v^\psi(t)\,=\,L^{-\frac{1}{2}}\int_0^t\sin\,[(t-s)L^{\frac{1}{2}}]\,
\psi(s)\,ds\qquad {\rm in}\,\,\mathscr H,\,\,\,t\geqslant0.
\end{equation}
The right hand side is well defined for $\psi\in L_2([0,t];\mathscr H)$
and is referred to as a generalized solution to (\ref{Eq
11})--(\ref{Eq 2+}).
\smallskip

{\noindent$\bullet$\,\,\,} We say that a control $\psi$ acts from a subspace $\mathscr
A\subset\mathscr H$ if $\psi(t)\in\mathscr A$ holds for all $t\geqslant 0$. The
set
\begin{equation}\label{Eq set V}
{\mathscr V}^t_\mathscr A\,:=\,\{v^\psi(t)\,|\,\,\psi\in L_2([0,t];\mathscr A)\},\qquad
t> 0
\end{equation}
is called reachable from $\mathscr A$ at the moment $t$.

An analog of the property (\ref{Eq Holmgren}) is a {\it  domain
controllability}, which is also derived from the
Holmgren-John-Tataru Theorem.
\begin{Proposition}\label{Prop Holmgren domain}
If $G\subset{\Omega}$ is an open subset provided $\mu(\partial G)=0$
then the equality
\begin{equation}\label{Eq Holmgren domain}
\overline{{\mathscr V}^t_{\mathscr H\langle G\rangle}}\,=\,\mathscr H\langle G^t\rangle,\qquad t>0
\end{equation}
holds.
\end{Proposition}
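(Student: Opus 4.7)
The plan is to establish the two inclusions in (\ref{Eq Holmgren domain}) separately: the inclusion $\overline{{\mathscr V}^t_{\mathscr H\langle G\rangle}}\subseteq\mathscr H\langle G^t\rangle$ by finite speed of propagation, and the reverse inclusion by a duality argument reducing to the sharp Holmgren--John--Tataru uniqueness theorem used already for Proposition \ref{Prop Holmgren}.

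For the easy inclusion I would argue as follows. Fix a smooth control $\psi\in C^\infty_0(({\Omega}\setminus{\Gamma})\times(0,t])$ whose spatial slices $\psi(\cdot,s)$ are supported in $G$, and let $v^\psi$ be the corresponding classical solution of (\ref{Eq 11})--(\ref{Eq 31}). The finite-propagation-speed principle for the wave equation on $({\Omega},g)$ (with unit speed and Dirichlet data on ${\Gamma}$) gives
$$
{\rm supp\,}v^\psi(\cdot,t)\;\subseteq\;\bigcup_{s\in[0,t]}\{x\in{\Omega}\,|\,\,{\rm d\,}(x,{\rm supp\,}\psi(\cdot,s))\leqslant t-s\}\;\subseteq\;\overline{G^{t}}.
$$
Since $\mu(\partial G^t)=0$ by the properties of the Riemannian volume recalled in Section \ref{Sec Lattices}, this means $v^\psi(\cdot,t)\in\mathscr H\langle G^t\rangle$. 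Passing to the closure via density of smooth controls in $L_2([0,t];\mathscr H\langle G\rangle)$ and continuity of $\psi\mapsto v^\psi(t)$ yields $\overline{{\mathscr V}^t_{\mathscr H\langle G\rangle}}\subseteq\mathscr H\langle G^t\rangle$.

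For the nontrivial reverse inclusion I would proceed by orthogonality. Let $\phi\in\mathscr H\langle G^t\rangle$ be orthogonal to ${\mathscr V}^t_{\mathscr H\langle G\rangle}$ and introduce the dual wave
$$
w(s)\,:=\,L^{-\frac{1}{2}}\sin\,[(t-s)L^{\frac{1}{2}}]\,\phi,\qquad s\in[0,t],
$$
which satisfies $w_{ss}+Lw=0$ in $\mathscr H$ together with the final conditions $w(t)=0$, $w_s(t)=-\phi$. Using (\ref{Eq v^psi}) and the selfadjointness of $L^{-\frac{1}{2}}\sin[(t-s)L^{\frac{1}{2}}]$ a direct computation gives the duality identity
$$
(v^\psi(t),\phi)_\mathscr H\;=\;\int_0^t(\psi(s),w(s))_\mathscr H\,ds,
$$
so that the assumption $\phi\perp{\mathscr V}^t_{\mathscr H\langle G\rangle}$ is equivalent to $P_{\mathscr H\langle G\rangle}w(s)=0$ for a.e.\ $s\in[0,t]$, i.e., to $w$ vanishing on $G\times[0,t]$ in the $L_2$-sense.

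The final step -- and the main obstacle -- is to convert this vanishing on the spacetime cylinder $G\times[0,t]$ into $\phi=0$ on $G^t$. Here I would invoke the same Holmgren--John--Tataru theorem underlying Proposition \ref{Prop Holmgren}. First extend $w$ to $s\in[t,2t]$ by the odd reflection $\widetilde w(s):=-w(2t-s)$; the identities $w(t)=0$ and $\widetilde w_s(t)=w_s(t)$ guarantee that the extended function is still a (distributional) solution of $w_{ss}+Lw=0$ on $[0,2t]$, and it still vanishes on $G\times[0,2t]$. Shifting the time origin to $s=t$, the sharp Tataru uniqueness statement says that a solution of the wave equation vanishing on $G\times(-t,t)$ must vanish on the set
$$
\{(x,\sigma)\,|\,\,{\rm d\,}(x,G)+|\sigma|<t\}.
$$
For any $x_0\in G^t$ one has ${\rm d\,}(x_0,G)<t$, so this set is a spacetime neighbourhood of $(x_0,0)$; therefore $w$ is identically zero in a neighbourhood of $(x_0,t)$ in the original time variable, and in particular $\phi(x_0)=-w_s(x_0,t)=0$. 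Since $\phi$ was already supported in $G^t$, this gives $\phi=0$, completing the proof.

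The principal technical point to check carefully is the application of Tataru's theorem near the manifold boundary ${\Gamma}$: the theorem is local and $w$ satisfies the wave equation only on ${\Omega}\setminus{\Gamma}$, but the argument is needed only at interior points of $G^t\subset{\Omega}$, where a genuine spacetime neighbourhood inside $({\Omega}\setminus{\Gamma})\times\mathbb R$ is available; the extension by odd reflection in time is compatible with the Dirichlet condition $w|_{\Gamma}=0$, so no boundary issues arise. Modulo this verification, the proof is a direct adaptation of the scheme that establishes Proposition \ref{Prop Holmgren}.
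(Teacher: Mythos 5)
The paper offers no proof of this proposition beyond the remark that it "is also derived from the Holmgren--John--Tataru Theorem," and your argument --- finite speed of propagation for the inclusion $\overline{{\mathscr V}^t_{\mathscr H\langle G\rangle}}\subseteq\mathscr H\langle G^t\rangle$, then duality via $w(s)=L^{-\frac12}\sin[(t-s)L^{\frac12}]\phi$ and the sharp uniqueness domain $\{(x,\sigma)\,|\,\,{\rm d}(x,G)+|\sigma|<t\}$ for the reverse one --- is exactly the standard route the paper is alluding to, so it is correct and essentially the same approach. The only point you treat a bit lightly is the globalization of Tataru's local interior theorem to that sharp domain with the \emph{intrinsic} distance on a manifold with boundary (distance-minimizing paths from $G$ to $x_0$ may run along $\Gamma$, so the issue is not only where the conclusion is drawn but where the uniqueness must propagate); that is precisely what is hidden in the citation backing Proposition \ref{Prop Holmgren}.
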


{\noindent$\bullet$\,\,\,} Recall that ${\mathfrak L}$ is the Hilbert subspace lattice in $\mathscr H$.
The map $\mathscr A\mapsto\overline{{\mathscr V}^t_\mathscr A}$ is evidently isotonic
w.r.t. $\mathscr A$ and monotone w.r.t. $t$. Hence, we have a {\it wave
isotony}
\begin{equation}\label{Eq wave isotony on L} I_{\rm
w}\,=\,\{i^t_{\rm w}\}_{t\geqslant 0}\,, \quad i^t_{\rm
w}:\mathscr A\mapsto\overline{{\mathscr V}^t_\mathscr A}
\end{equation}
defined on the {\it whole} lattice ${\mathfrak L}$.

The assumptions of Proposition \ref{Prop Holmgren domain} mean
that $G\in\langle G\rangle \in {\mathfrak L}_{\Omega}^{\rm h}$ holds. Comparing
(\ref{Eq wave isotony on L}) with (\ref{Eq Hilbert isotony}) and
taking into account (\ref{Eq Holmgren domain}), we conclude that
Proposition \ref{Prop Holmgren domain} implies
\begin{equation}\label{Eq I^h=I^w}
I_*\,=\,I_{\rm w}{\upharpoonright}{\mathfrak L}_{\Omega}^{\rm h}\,.
\end{equation}

By (\ref{Eq I^h=I^w}) and with regard to
 (\ref{Eq
gamma^w=gamma^h}),
 we get the following scheme of determination of
a simple manifold:
\begin{equation}\label{Eq scheme wave}
{\gamma}^{\rm w},\,I_{\rm
w}\,\,\overset{\sqcup}{\,\,\Rightarrow}\,\,\mathfrak L^{\rm
h}_{\Gamma}\,\,\Rightarrow\,\,{\rm At\,}\overline{I_{\rm
w}\mathfrak L^{\rm h}_{\Gamma}}\,\,\Rightarrow\,\,({\Omega}_*,\rm
d_*).
\end{equation}

{\noindent$\bullet$\,\,\,} Let a space $\tilde {\mathscr H}=U\mathscr H$ and operator $\tilde {L}=ULU^*$
be given. Then one can determine  the isotony $\tilde I_{\rm
w}=\Phi I_{\rm w}\Phi^*$ in $\tilde {\mathscr H}$ as follows:
\begin{align}
\notag & \tilde I_{\rm w}=\{\tilde i_{\rm w}^t\}_{t\geqslant
0}\,,\,\,\,\tilde
i_{\rm w}^t: \mathscr B \mapsto \tilde {\mathscr V}_\mathscr B^t\,\,\,\,(\mathscr B\subset\tilde{\mathscr H}),
\quad \tilde{ {\mathscr V}}^t_\mathscr B\,\overset{(\ref{Eq set V})}=\,\{\tilde v^\psi(t)\,|\,\,\psi\in L_2([0,t];\mathscr B)\},\\
\label{Eq repres tilde I}& \tilde v^\psi(t)\,\overset{(\ref{Eq
v^psi})}=\,{\tilde L}^{-\frac{1}{2}}\int_0^t\sin\,[(t-s){\tilde
L}^{\frac{1}{2}}]\,
\psi(s)\,ds\,.
\end{align}

At last, if we have an $\tilde{\mathscr H}=U\mathscr H$ along with
the nest $\tilde{\gamma}^{\rm w}=U{\gamma}^{\rm w}$ and isotony
$\tilde I_{\rm w}=U I_{\rm w}U^*$ then the determination can be
fulfilled just by copying the scheme (\ref{Eq scheme wave}):
\begin{equation}\label{Eq scheme tilde wave}
\tilde{\gamma}^{\rm w},\,\tilde I_{\rm
w}\,\,\overset{\sqcup}{\,\,\Rightarrow}\,\,\tilde{\mathfrak
L}^{\rm h}_{\Gamma}\,\,\Rightarrow\,\,{\rm At\,}\overline{\tilde
I_{\rm w}\tilde{\mathfrak L}^{\rm
h}_{\Gamma}}\,\,\Rightarrow\,\,(\tilde{\Omega}_*,\tilde{\rm
d}_*)\overset{\rm isom}=({\Omega},\rm d).
\end{equation}

\section{Determination of manifold}\label{Sec Reconstr}

\subsubsection*{Spectral representation}
The Dirichlet spectral problem in ${\Omega}$ is
$$
\begin{cases}
-\Delta\phi=\lambda\phi &{\rm in}\,\,\,{\Omega}\setminus{\Gamma};\cr
\phi=0 & {\rm on}\,\,\,{\Gamma};
\end{cases}
$$
or, equivalently, $L\phi=\lambda\phi$ \,in $\mathscr H$. Let
$0<\lambda_1<\lambda_2\leqslant\dots$ be the spectrum of the problem (of
operator $L$) and $\phi_1,\phi_2,\dots$ the corresponding
eigenfunctions normalized by $(\phi_k,\phi_l)=\delta_{kl}$.

Let $\Phi:\mathscr H\to {\bf l}_2=:\tilde {\mathscr H}$,
$$
\Phi y\,=\,\tilde y:=\{c_k\}_{k=1}^\infty,\quad c_k=(y,\phi_k)
$$
be the Fourier transform that diagonalizes  operator $L$, i.e.,
represents it in the form $\tilde L:=\Phi L\Phi^*={\rm
diag\,}\{\lambda_1,\lambda_2,\dots\}$.

The subspace $\tilde{\mathscr K}=\Phi{\mathscr K}$ is referred to as the harmonic
function subspace in the spectral representation.

\subsubsection*{Completing the determination}

At last, assume that the spectrum $\lambda_1,\lambda_2,\dots$ and
the subspace $\tilde {\mathscr K}\subset {\bf l}_2$ are given. Let
$({\Omega},\rm d)$ be a simple manifold. Then it (more precisely,
its isometric copy) can be constructed as follows.
\smallskip

\noindent{\it Step 1\,\,\,}Putting $U=\Phi$, determine the
boundary nest $\tilde{\gamma}^{\rm w}$ by (\ref{Eq repres tilde
gamma^w}).
\smallskip

\noindent{\it Step 2\,\,\,}Determine the isotony $\tilde I_{\rm
w}$ by (\ref{Eq repres tilde I}).
\smallskip

\noindent{\it Step 3\,\,\,}Determine the Hilbert wave copy
$(\tilde{\Omega}_*,\tilde{\rm d}_*)$ by the scheme (\ref{Eq scheme
tilde wave}).

\subsubsection*{Commments}
{\noindent$\bullet$\,\,\,} A way to give the spectral image $\tilde {\mathscr K}$ of the harmonic
subspace ${\mathscr K}$ is to chose a basis $h_1,h_2,\dots$ in ${\mathscr K}$, to
form the matrix $\varkappa:=\{(h_i,\phi_k)\}_{i,k\geqslant 1}$ and then
determine $(\tilde{\Omega}_*,\tilde{\rm d}_*)$ from
$\lambda_1,\lambda_2,\dots$ and $\varkappa$. Such a trick does not
assume the boundary ${\Gamma}$ to be known and, hence, is admissible
in the M.Kac problem. Moreover, it is motivated by the following
observation.

Consider the classical inverse problem on determination of the
inhomogeneous string density $\rho>0$ from the {\it spectral data}
$\{\lambda_k;\nu_k\}_{k\geqslant 1}$. The data consist of the spectrum
$0<\lambda_1<\lambda_2<\dots$ of the problem
$$
\begin{cases}
-\rho^{-1}\phi''=\lambda\phi &{\rm in}\,\,\,(0,1);\cr
\phi(0)=\phi(l)=0
\end{cases}
$$
and the constants $\nu_k=\phi'_k(0)$, where $\phi_k$ are the
eigenfunctions normalized by $(\phi_i,\phi_k)=\delta_{ik}$ in
$\mathscr H=L_{2,\,\rho}(0,l)$. With the problem one associates the
operators
\begin{align*}
& L_0=-\rho^{-1}(\,\cdot\,)''{\upharpoonright} \{y\in
H^2[0,1]\,|\,\,y(0)=y'(0)=y(1)=0\},\\
& L= -\rho^{-1}(\,\cdot\,)''{\upharpoonright} \{y\in
H^2[0,1]\,|\,\,y(0)=y(1)=0\},\\
& L_0^*=-\rho^{-1}(\,\cdot\,)''{\upharpoonright} \{y\in
H^2[0,1]\,|\,\,y(1)=0\},
\end{align*}
which constitute the triple $L_0\subset L\subset L_0^*$ (see
(\ref{Eq oper triple})). The relevant 'harmonic function subspace'
is ${\mathscr K}={\rm Ker\,}L_0^*={\rm
span\,}\{h\},\,\,h(x):=1-x$;\,\,${\rm dim\,}{\mathscr K}=1$. In the mean
time, we have
\begin{align*}
&
(h,\phi_k)=\int_0^1(1-x)\,\phi_k(x)\,\rho(x)\,dx=-\frac{1}{\lambda_k}\int_0^1(1-x)\,\phi''_k(x)\,dx=\frac{\phi'_k(0)}{\lambda_k}.
\end{align*}
Therefore $\nu_k=\lambda_k(h,\phi_k)$, so that to give
$\{\nu_k\}_{k\geqslant 1}$ is to provide the row
$\varkappa=\{(h,\phi_k)\}_{k\geqslant 1}$.
\smallskip

Thus, in the multidimensional case, it looks reasonable to regard
$\left\{\lambda_k; \{\varkappa_{ik}\}_{i\geqslant 1}\right\}_{k\geqslant 1}$
with $\varkappa_{ik}=(h_i,\phi_k)$ as a version of the spectral
data relevant for the M.Kac problem.
\smallskip

{\noindent$\bullet$\,\,\,} The isospectral drums are the ones with the same spectrum
$\sigma(L)=\{\lambda_k\}_{k\geqslant 1}$. As the above considerations
show, such drums differ by a position of the harmonic subspace
${\mathscr K}$ w.r.t. the eigenbasis of the operator $L$. Therefore, any
information on such a position may be useful for determination.
For instance, some information is contained in the {\it angular
spectrum} of ${\mathscr K}$ that is the sequence $\{\alpha_k\}_{k\geqslant
1},\,\,\alpha_k:=\|P_{\mathscr K}\phi_k\|$, where $P_{\mathscr K}$ projects in $\mathscr H$
onto ${\mathscr K}$. It does not fix ${\mathscr K}$ uniquely but reduces the reserve
of ${\mathscr K}$'s corresponding to the given spectrum.
\smallskip

{\noindent$\bullet$\,\,\,} Assume that we are given the spectral representation $\tilde
L_0=\Phi L_0\Phi^*$ of the minimal Laplacaian (see (\ref{Eq oper
triple})). Then, along with it, we have the operators $\tilde L$
(as the Friedrichs extension of $\tilde L_0$) and $\tilde L_0^*$
(as adjoint). Further, we find $\sigma(\tilde L)=\sigma(L)$ and
$\tilde{\mathscr K}={\rm Ker\,}\tilde L_0^*$. As the result, we get the
data $\sigma(L),\tilde{\mathscr K}$, which determine ${\Omega}$. Sure, to
repeat such a trick it suffices to have $L_0$ in {\it any}
representation. By the latter, our technique enables to determine
manifolds from the characteristic operator function of $L_0$ (see,
e.g., \cite{Strauss}) or from any kind of the data that determine
$L_0$ up to the unitary equivalence.

\subsubsection*{A hypothesis}
{\noindent$\bullet$\,\,\,} Let $A_0\subset A_0^*$ be a symmetric operator in $\mathscr
H$ provided $\overline{{\rm Dom\,}A_0}=\mathscr H$ and $(A_0y,y)\geqslant
\nu\,\|y\|^2\,\,\,(\nu>0)$. Operator $A_0$ determines its
(extremal) self-adjoint extensions $A_M$ and $A_\mu$ by M.Krein
and Friedrichs respectively (see, e.g., \cite{BirSol,MMM}). The
following operator theory fact is relevant to our approach to
M.Kac's problem.
\begin{Proposition}\label{Prop Krein ext}
The pair $A_M$,$A_\mu$ determines the operator $A_0$.
\end{Proposition}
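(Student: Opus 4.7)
The plan is to show that $A_0$ is recovered from the pair $A_M,A_\mu$ as their common restriction:
\begin{equation*}
{\rm Dom\,}A_0\,=\,{\rm Dom\,}A_M\cap{\rm Dom\,}A_\mu,\qquad A_0\,=\,A_M{\upharpoonright}\bigl({\rm Dom\,}A_M\cap{\rm Dom\,}A_\mu\bigr).
\end{equation*}
The inclusion $\subseteq$ is immediate since both $A_M$ and $A_\mu$ extend $A_0$; all of the content sits in the reverse inclusion, and it is there that the Krein extension structure does the job.

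Concretely, I would invoke the classical description of the soft (Krein) extension (see, e.g., \cite{BirSol,MMM}):
\begin{equation*}
{\rm Dom\,}A_M\,=\,{\rm Dom\,}A_0\,\dotplus\,{\mathscr N},\qquad {\mathscr N}\,:=\,{\rm Ker\,}A_0^*\,=\,{\rm Ker\,}A_M,
\end{equation*}
with $A_M(u_0+v)=A_0u_0$ for $u_0\in{\rm Dom\,}A_0$, $v\in{\mathscr N}$ (the sum is direct because $A_0\geqslant\nu I>0$ forces ${\mathscr N}\cap{\rm Dom\,}A_0=\{0\}$). Given $u\in{\rm Dom\,}A_M\cap{\rm Dom\,}A_\mu$, I would decompose $u=u_0+v$ according to this formula; since ${\rm Dom\,}A_0\subset{\rm Dom\,}A_\mu$, the summand $v=u-u_0$ also lies in ${\rm Dom\,}A_\mu$, and hence in ${\mathscr N}\cap{\rm Dom\,}A_\mu$.

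The clinching observation is that this intersection is trivial. Indeed, $A_\mu\subset A_0^*$ because $A_\mu$ is a self-adjoint extension of $A_0$; hence for $v\in{\mathscr N}\cap{\rm Dom\,}A_\mu$ one has $A_\mu v=A_0^*v=0$. But the Friedrichs extension inherits the lower bound, $(A_\mu y,y)\geqslant\nu\|y\|^2$, so ${\rm Ker\,}A_\mu=\{0\}$ and $v=0$. Thus $u=u_0\in{\rm Dom\,}A_0$; on this common domain both $A_M$ and $A_\mu$ agree with $A_0$, which completes the reconstruction. The only delicate point I anticipate is pinning the Krein-extension domain decomposition to a precise reference; once that is in place, the argument reduces to the one-line linear-algebra calculation above inside ${\rm Dom\,}A_0^*$, driven entirely by the strict positivity $A_0\geqslant\nu I$.
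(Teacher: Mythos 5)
Your argument is correct, and it proves the proposition by a genuinely different formula than the paper does. You recover $A_0$ as the \emph{common restriction} of the two extremal extensions: ${\rm Dom\,}A_0={\rm Dom\,}A_M\cap{\rm Dom\,}A_\mu$, which you get from the Krein domain decomposition ${\rm Dom\,}A_M={\rm Dom\,}A_0\dotplus{\mathscr K}$, ${\mathscr K}={\rm Ker\,}A_0^*$, together with the transversality ${\mathscr K}\cap{\rm Dom\,}A_\mu=\{0\}$ (valid because $A_\mu\subset A_0^*$ and the Friedrichs extension inherits the lower bound $\nu>0$, so ${\rm Ker\,}A_\mu=\{0\}$); all steps are sound. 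The paper argues differently: it uses the range identity ${\rm Ran\,}A_0=\mathscr H\ominus{\rm Ker\,}A_0^*=\mathscr H\ominus{\rm Ker\,}A_M$ and the injectivity of $A_\mu$ to obtain ${\rm Dom\,}A_0=A_\mu^{-1}(\mathscr H\ominus{\mathscr K})$, i.e. $A_0=A_\mu{\upharpoonright}\,A_\mu^{-1}(\mathscr H\ominus{\mathscr K})$ with ${\mathscr K}={\rm Ker\,}A_M$. Both proofs rest on the same standard Krein--Friedrichs facts and are of comparable length, but they buy slightly different things. The paper's formula uses $A_M$ only through its kernel, so Corollary \ref{Corol} ($A_0$ is determined by the pair $A_\mu,{\mathscr K}$) is read off at once, and it is exactly that corollary, applied to the triple (\ref{Eq oper triple}) with ${\mathscr K}$ the harmonic subspace, that drives the determination scheme; to extract the same corollary from your route you would add the remark that your use of $A_M$ enters only via the decomposition ${\rm Dom\,}A_M={\rm Dom\,}A_0\dotplus{\mathscr K}$, hence only via ${\mathscr K}$. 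In exchange, your version is more symmetric in $A_M$ and $A_\mu$ and makes explicit the useful fact ${\mathscr K}\cap{\rm Dom\,}A_\mu=\{0\}$, which the paper's proof never states.
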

\begin{proof}
Recall the well-known facts and relations:
\begin{align}
\notag & {\rm Dom\,}A_M={\rm Dom\,}A_0\dotplus \mathscr
K,\quad\mathscr
K:={\rm Ker\,}A_0^*= {\rm Ker\,}A_M;\\
\notag & {\rm Dom\,}A_\mu={\rm Dom\,}A_0\dotplus
A_\mu^{-1}\mathscr K.
\end{align}
By ${\rm Ran\,}A_0=\mathscr H\ominus {\rm Ker\,}A_0^*=\mathscr
H\ominus {\rm Ker\,}A_M=\mathscr H\ominus{\mathscr K}$, the
relation $A_0\subset A_\mu$ implies
$$
{\rm Dom\,}A_0=A_0^{-1}{\rm Ran\,}A_0=A_0^{-1}(\mathscr H\ominus
{\rm
 Ker\,}A_M)=A_\mu^{-1}(\mathscr H\ominus{\mathscr K})
$$
($A_0^{-1}$ is the inverse on the image) that leads to
\begin{equation}\label{Eq A0 via A mu and K}
 A_0=A_0\upharpoonright {\rm Dom\,}A_0=A_\mu\upharpoonright {\rm Dom\,}A_0=A_\mu\upharpoonright{A_\mu^{-1}(\mathscr H\ominus {\mathscr K})}.
\end{equation}
\end{proof}
The representation (\ref{Eq A0 via A mu and K}) implies the
following.
\begin{Corollary}\label{Corol}
Operator $A_0$ (and, hence, $A_0^*$) is determined by the pair
$A_\mu,{\mathscr K}$.
\end{Corollary}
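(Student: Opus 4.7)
The plan is to obtain the Corollary essentially as a re-reading of the identity (\ref{Eq A0 via A mu and K}) established in the proof of Proposition \ref{Prop Krein ext}. That identity expresses $A_0$ as $A_\mu \upharpoonright A_\mu^{-1}(\mathscr H \ominus \mathscr K)$, and the entire right-hand side is built from the pair $(A_\mu,\mathscr K)$ alone---there is no residual dependence on $A_M$. So the Corollary should follow from the Proposition without any additional analysis, simply by inspecting which data appear in the derived formula.

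Concretely, I would proceed in three short steps. First, form the orthogonal complement $\mathscr H \ominus \mathscr K$, which is fixed once $\mathscr K$ (and the ambient Hilbert structure) is given. Second, apply $A_\mu^{-1}$ to this subspace to obtain a candidate for ${\rm Dom\,}A_0$; note that $A_\mu^{-1}$ is unambiguously determined by $A_\mu$ itself, since $A_\mu$ is positive definite self-adjoint and hence boundedly invertible (its lower bound being inherited from that of $A_0$ via the Friedrichs construction). Third, restrict $A_\mu$ to this domain; by (\ref{Eq A0 via A mu and K}) the result is exactly $A_0$. Once $A_0$ has been reconstructed from $(A_\mu,\mathscr K)$, its adjoint $A_0^*$ is determined automatically as a functorial construction from $A_0$.

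There is essentially no technical obstacle to clear: the preceding Proposition has already done the nontrivial algebraic work of identifying ${\rm Dom\,}A_0$ as $A_\mu^{-1}(\mathscr H \ominus \mathscr K)$. The Corollary is simply the observation that in that derivation $A_M$ entered only through the subspace ${\rm Ker\,}A_M = \mathscr K$, which is among the given data; hence $A_M$ itself can be dropped from the hypothesis of Proposition \ref{Prop Krein ext} without loss.
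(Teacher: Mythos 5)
Your proposal is correct and coincides with the paper's own reasoning: the Corollary is stated there precisely as a consequence of the representation (\ref{Eq A0 via A mu and K}), i.e.\ $A_0=A_\mu\upharpoonright A_\mu^{-1}(\mathscr H\ominus\mathscr K)$, whose right-hand side involves only $A_\mu$ and $\mathscr K$. Your observation that $A_M$ entered the Proposition's proof only through ${\rm Ker\,}A_M=\mathscr K$ is exactly the point the paper is making.
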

\noindent  Analyzing the scheme (\ref{Eq scheme tilde wave}), one
can easily recognize: it is the latter fact that, being applied to
the triple (\ref{Eq oper triple}), provides the determination.

\smallskip

{\noindent$\bullet$\,\,\,} Here we formulate a hypothesis.

Let $A_0$ and $A'_0$ be the densely defined closed symmetric
positive definite operators in $\mathscr H$. Let $A_M,\, A_\mu$
and $A'_M,\,A'_\mu$  be their extremal extensions.
\begin{Hypothesis}\label{Hyp}
Let $VA_\mu=A'_\mu V$ and $WA_M=A'_MW$ holds, where $V,W$ are the
unitary operators. Then one has $V=W=:U$ and $UA_0=A'_0U$.
\end{Hypothesis}

\noindent If the hypothesis is true and its conditions are
satisfied, then all positive extensions of $A_0$ and $A_0'$ are
intertwined by a single unitary operator $U$. One more consequence
is that the relations $A_\mu=A_\mu'$ and $UA_M=A'_MU$ imply
$U=\mathbb I$, leading to $A_M=A'_M$ and ${\mathscr K}={\mathscr
K}'$.
\medskip

{\noindent$\bullet$\,\,\,} Let the extensions $A_\mu$ and $A_M$ of $A_0$ have the
discrete spectra $\sigma(A_\mu)$ and $\sigma(A_M)$. The spectra
determine these extensions up to unitary equivalence. Now assume
that the Hypothesis \ref{Hyp} is valid, and let the extensions
$A'_\mu$ and $A'_M$ of $A'_0$ be such that
$\sigma(A_\mu)=\sigma(A'_\mu)$ and $\sigma(A_M)=\sigma(A'_M)$
holds. Then the operators $A_0$ and $A_0'$ turn out to be
unitarily equivalent, and we get $A_0'=UA_0U^*$ and ${\mathscr K}'=U{\mathscr K}$.
Applied to M.Kac problem, this would give a chance to determine a
simple manifold from two spectra $\sigma(L)$ and $\sigma(L_M)$,
where $L_M$ is the M.Krein extension of the minimal Laplacian
$L_0$ (see (\ref{Eq oper triple})). Note that such a situation
does occur in the one-dimensional case mentioned in Comments. In
this case, we have $L_M=-\rho^{-1}(\,\cdot\,)''{\upharpoonright} \{y\in
H^2[0,1]\,|\,\,y(0)+ly'(0)=0,\,\,y(1)=0\}$ and the density $\rho$
indeed can be determined from two spectra corresponding to
different boundary conditions at $x=0$.

In the multidimensional problem, to give $\sigma(L_M)$ as
additional (to $\sigma(L)$) data instead of the matrix $\varkappa$
would be more attractive.
\medskip

{\noindent$\bullet$\,\,\,}
 There are non-isometric drums, whose
spectra of the Dirichlet problem and the Neumann problem coincide.
\cite{Buser}. However, the extension of $L_N$ of the minimal
Laplacean $L_0$, which satisfies Neumann's condition
$\partial_ny\big|_{\Gamma}=0$, does not coincides with its soft
extension $L_M$ and, in contrast to the latter, is of no invariant
operator sense. We continue to hope for the Hypothesis \ref{Hyp}.

\bigskip

\noindent{\bf Key words:}\,\,\,M.Kac problem, augmented data,
lattice theory, dynamical system with boundary control.
\smallskip

\noindent{\bf MSC:}\,\,\,47Axx,\,\, 47B25,\,\, 35R30.

\end{document}